\journal{Reports on Mathematical Physics, {\bf 64}, 417-428 (2009)}
\begin{document}

\newtheorem{definition}{Definition}[section]
\newtheorem{theorem}{Theorem}[section]

\begin{frontmatter}

\title{Exhaustive Generation of Orthomodular Lattices with 
Exactly One Non-Quantum State}

\author{Mladen Pavi\v ci\'c}

\ead{pavicic@grad.hr}

\ead[url]{http://m3k.grad.hr/pavicic}

\address{Physics Chair, Faculty of Civil Engineering, 
University of Zagreb, Croatia}

\begin{abstract}
We propose a kind of reverse Kochen-Specker theorem 
that amounts to generating orthomodular lattices 
(OMLs) with exactly one state that do not admit 
properties of the Hilbert space. We apply MMP 
algorithms to obtain smallest OMLs with 
35 atoms and 35 blocks (35-35) and all other ones up 
to 38-38. We find out that all but one of them admit exactly
one state and discover several other properties of theirs. 
Previously known such OMLs have 44 atoms and 44 blocks or
more. We present some of them in our notation.   
\end{abstract}

\begin{keyword}
Reverse Kochen-Specker theorem\sep Hilbert space \sep 
orthomodular lattices \sep MMP diagrams \sep Greechie diagrams 
\sep strong set of states

\PACS 03.65.Ta \sep 03.65.Ud \sep 02.10 \sep 05.50

\end{keyword}

\end{frontmatter}

\section{Introduction}
\label{sec:intro}

If we assumed that values of measured observables of a 
quantum system in all possible setups were completely 
independent of and unaffected by measurements of other 
observables of the same system, then---according to 
the Kochen-Specker theorem---we would run into a 
contradiction for particular setups. 

Such setups that cannot be given a classical interpretation 
were not easy to find and only a handful of them have been 
found until 2004 by Simon Kochen and 
E. P. Specker \cite{koch-speck}, Asher Peres \cite{peres}, 
Jason Zimba and Roger Penrose \cite{zimba-penrose}, 
Ad{\'a}n Cabello \cite{cabell-99}, Michael Kernaghan 
\cite{kern}, and others 
\cite{kern-peres,cabell-est-96a,cabell-garc98,ruuge05,massad-arravind99}.
In 2004 we designed an algorithm and wrote programs for 
exhaustive generation of Kochen-Specker (KS) setups 
by means of MMP diagrams (hypergraphs), state 
evaluation, and interval analysis.~\cite{pmmm04b,pmmm03a}

In a 3-dim space there is a correspondence between the MMP 
diagrams applied to a generation of orthogonal vectors from the 
Hilbert space and the Greechie diagrams of orthomodular 
lattices (OMLs) underlying the space. The correspondence stems 
from the fact that a 3-dim system of equations representing 
orthogonalities as defined in a KS setup
can have real solutions only for loops of order five and bigger 
and that means that MMP diagrams can be interpreted as Greechie 
diagrams of OMLs.~\cite{pmmm03a} (Greechie diagrams with a 
loop of order less then five do not represent a lattice.)  
Hence, when we want to find KS vectors we first
find OMLs that do not admit a classically 
strong set of states (Def.\ \ref{def:strong}) of type 0-1 
and only then we look whether there are underlying vectors
according to the procedure given in ~\cite{pmmm03a}. 

This result also enables us to take the opposite 
approach and find OMLs that do not admit a strong set of 
states at all (Def.\ \ref{def:strong}). Since any Hilbert 
space admits a strong set of states this means that such 
OMLs cannot underly any Hilbert space and that there cannot 
exist ``quantum states'' defined on such OMLs in the same 
sense in which there cannot exist 0-1 (``classical'') 
states on KS OMLs. Thus we can view our generation of 
OMLs that do not admit a strong set of states as a 
kind of {\em reverse Kochen-Specker theorem}.
Of course, since no strong set of states is admitted 
there can be no vectors underlying these OMLs and they 
cannot be used for proving the Kochen-Specker theorem. 

Nevertheless the project turns 
out to be equally complicated as the aforementioned 
exhaustive generation of OMLs that do not admit 0-1 
for the aforementioned Kochen-Specker setups but 
fortunately our previous results 
help.~\cite{mpoa99,bdm-ndm-mp-1,pmmm03a}  

Our approach is based on generating and recognising 
orthomodular lattices that do not admit any of the 
properties of Hilbert lattices underlying Hilbert space.
So no set of experimental detections along directions 
determined by such orthomodular lattices could be 
consistent with any quantum measurement. Note here 
that we cannot have a simple inversion, i.e., 
we cannot have classical measurement 
that would not be quantum because 
the distributivity is a stronger property than 
orthomodularity or modularity and non-orthomodularity 
immediately means non-distributivity as well. We can 
carry out a classical algorithm on a quantum computer 
but not the other way round. 

In this paper we investigate one of possible 
realisations\footnote{We also pursue other possible 
                      alleys for finding non-quantum 
      conditions imposed on either orthomodular 
      lattices or Hilbert space 
      vectors.~\cite{bdm-ndm-mp-et-al-08}} 
of the approach---orthomodu\-lar lattices 
with ``exactly one state,'' reviewed in 2008 by 
Mirko Navara \cite{navara08} and also called 
{\em nearly exotic} \cite{ptak87}.
Orthomodular lattices with exactly one state 
were considered by Frederic W.\ Shultz \cite{shultz74}, 
Pavel Pt\'ak \cite{ptak87}, Mirko Navara \cite{navara94}, 
Hans Weber \cite{weber94}, and others 
\cite{navara-rogal88,navara-rogal88a}. They proved 
useful in obtaining many new properties of orthomodular 
lattices and other algebras. 

We realized that orthomodular lattices with 
exactly one state are examples of our ``non-quantum'' 
lattices after we proved that their single states cannot 
belong to a strong set of states. We then noticed that 
apparently no one of the authors pointed out that their 
orthomodular lattices with exactly one state do not admit 
a strong set of states and apparently also not any stronger 
condition than the orthomodularity itself and we considered 
that worth checking. 

Our approach in the present paper is to apply our 
algorithm for exhaustive generation of arbitrary 3D MMP 
diagrams to obtain orthomodular lattices with exactly one state. 
As an example we generate the smallest Greechie diagrams 
of the kind with 35-35 atoms/blocks (vertices/edges) to 38-38 
atoms/blocks,\footnote{In Ref.\ \cite{navara08} an orthomodular 
                       lattice with 44 atoms and 44 blocks 
         \cite{navara94} was cited as the smallest known 
         example of orthomodular lattices with exactly one state.}
present some of them graphically and discuss their properties. 
We also announce a new generation of algorithms and programs 
that will be specially designed for obtaining such lattices for 
arbitrary high number of atoms/blocks theoretically and 
much higher number of them realistically.  

\section{\label{sec:ortho}Orthomodular Lattices, MMP Diagrams, 
and States}

Closed subspaces of the Hilbert space form an algebra called a Hilbert
lattice.  A Hilbert lattice is a kind of orthomodular lattice.
In any Hilbert lattice
the operation \it meet\/\rm, $a\cap b$, corresponds to
set intersection, ${\cal H}_a\bigcap{\cal H}_b$, of subspaces ${\cal
H}_a,{\cal H}_b$ of Hilbert space ${\cal H}$, the ordering relation
$a\le b$ corresponds to ${\cal H}_a\subseteq{\cal H}_b$, the operation
\it join\/\rm, $a\cup b$, corresponds to the smallest closed subspace of
$\cal H$ containing ${\cal H}_a\bigcup{\cal H}_b$, and
the \it orthocomplement\/\rm\ $a'$ corresponds
to ${\cal H}_a^\perp$, the set of vectors orthogonal to all vectors in
${\cal H}_a$. Within Hilbert space there is also an operation which
has no a parallel in the Hilbert lattice: the sum of two subspaces
${\cal H}_a+{\cal H}_b$ which is defined as the set of sums of vectors
from ${\cal H}_a$ and ${\cal H}_b$. We also have
${\cal H}_a+{\cal H}_a^\perp={\cal H}$. 

One can define all the lattice operations on Hilbert space 
itself following the above definitions 
(${\cal H}_a\cap{\cal H}_b={\cal H}_a\bigcap{\cal H}_b$,
etc.). Thus we have
${\cal H}_a\cup{\cal H}_b=\overline{{\cal H}_a+{\cal H}_b}=
({\cal H}_a+{\cal H}_b)^{\perp\perp}=
({\cal H}_a^\perp\bigcap{\cal
H}_b^\perp)^\perp$,~\cite[p.~175]{isham} where
$\overline{{\cal H}_c}$ is the closure of ${\cal H}_c$, and therefore
${\cal H}_a+{\cal H}_b\subseteq{\cal H}_a\cup{\cal H}_b$.
When ${\cal H}$ is finite dimensional or when
the closed subspaces ${\cal H}_a$ and  ${\cal H}_b$ are orthogonal
to each other then ${\cal H}_a+{\cal H}_b={\cal H}_a\cup{\cal H}_b$.
\cite[pp.~21-29]{halmos}, \cite[pp.~66,67]{kalmb83},
\cite[pp.~8-16]{mittelstaedt-book} Every Hilbert space is 
orthomodular: ${\cal H}_a\equiv{\cal H}_b\ {\buildrel\rm def\over=}\ 
({\cal H}_a\cap{\cal H}_b)\cup({\cal H}_a^\perp\cap{\cal H}_b^\perp)=
{\cal H}\quad\Longrightarrow\quad{\cal H}_a={\cal H}_b$.

\medskip 
Formalising the above operations we can define 
orthomodular lattice as follows:
\begin{definition}\label{def:ourOL}
An {\em orthomodular lattice} {\rm (OML)\/} is an algebra 
$\langle{\mathcal{L}}_0,',\cup,\cap\rangle$
such that the following conditions are satisfied for any
$a,b,c\in \,{\mathcal{L}}_0$ 
{\rm \cite[L1-L6\ \&\ L7$\,'$(Th.~3.2)]{pav93}}:
\begin{eqnarray}
&&a\cup b\>=\>b\cup a\label{eq:aub}\\
&&(a\cup b)\cup c\>=\>a\cup (b\cup c)\\
&&a''\>=\>a\label{eq:notnot}\\
&&a\cup (b\cup b\,')\>=\>b\cup b\,'\\
&&a\cup (a\cap b)\>=\>a\label{eq:ababa}\\
&&a\cap b\>=\>(a'\cup b\,')'\\
&&a\equiv b=1
\quad\Longrightarrow\quad a=b, \label{eq:aAb}
\end{eqnarray}
where $a\equiv b\ {\buildrel\rm def\over=}\ 
(a\cap b)\cup(a'\cap b')$.
In addition, since $a\cup a'=b\cup b\,'$ for any $a,b\in
\,{\mathcal{L}}_0$, we define the {\em greatest element of
the lattice} {\rm (1)} and the {\em least element of the
lattice} \rm{(0)}:
\begin{eqnarray}
\qquad\textstyle{1}\,{\buildrel\rm def\over=}a\cup a',\qquad\qquad
\rm\textstyle{0}\,{\buildrel\rm def\over =}a\cap a'\label{eq:onezero}
\end{eqnarray}
the {\rm ordering relation ($\le$) on the lattice}:
\begin{eqnarray}
\qquad a\le b\ \quad{\buildrel\rm def\over\Longleftrightarrow}\quad\ a\cap b=a
\quad\Longleftrightarrow\quad a\cup b=b
\end{eqnarray}
and the {\rm orthogonality}
\begin{eqnarray}
a\perp b\ \quad{\buildrel\rm def\over\Longleftrightarrow}\quad\  
a\le b'.\label{eq:orthogonality}
\end{eqnarray}
\end{definition}

Now let us look at the orthogonal vectors that determine 
directions in which we can orient our detection devices 
and therefore also directions of observable projections. 
We can choose one-dimensional subspaces ${\cal H}_a,\dots
,{\cal H}_e$ as shown in Fig.~\ref{fig:mmp}, where we 
denote them as $a,\dots,e$.

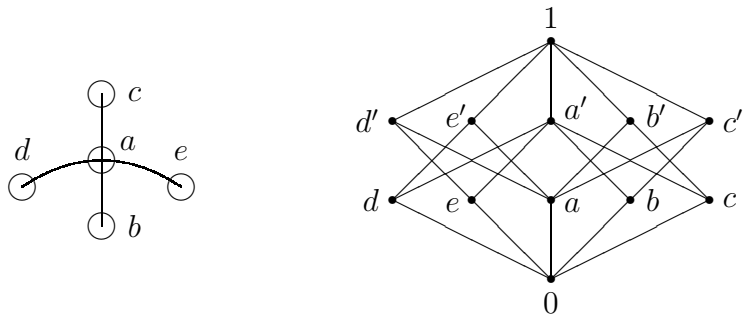
\begin{figure}[htbp]\centering
  \setlength{\unitlength}{1pt}
  \begin{picture}(330,100)(0,0)
    \put(10,0){
      \begin{picture}(60,80)(0,0)
        \qbezier(0,40)(30,60)(60,40)
        \put(30,25){\line(0,1){50}}
        \put(30,75){\circle{10}}
        \put(30,25){\circle{10}}
        \put(30,50){\circle{10}}
        \put(0,40){\circle{10}}
        \put(60,40){\circle{10}}
        \put(0,50){\makebox(0,0)[b]{$d$}}
        \put(40,54){\makebox(0,0)[b]{$a$}}
        \put(60,50){\makebox(0,0)[b]{$e$}}
        \put(40,25){\makebox(0,0)[l]{$b$}}
        \put(40,75){\makebox(0,0)[l]{$c$}}
      \end{picture}
    } 
    \put(150,5){
      \begin{picture}(60,90)(0,0)
        \put(60,90){\line(-2,-1){60}}
        \put(60,90){\line(-1,-1){30}}
        \put(60,90){\line(0,-1){30}}
        \put(60,90){\line(1,-1){30}}
        \put(60,90){\line(2,-1){60}}
        \put(60,0){\line(-2,1){60}}
        \put(60,0){\line(-1,1){30}}
        \put(60,0){\line(0,1){30}}
        \put(60,0){\line(1,1){30}}
        \put(60,0){\line(2,1){60}}
        \put(0,60){\line(1,-1){30}}
        \put(0,60){\line(2,-1){60}}
        \put(30,60){\line(-1,-1){30}}
        \put(30,60){\line(1,-1){30}}
        \put(60,60){\line(-2,-1){60}}
        \put(60,60){\line(-1,-1){30}}
        \put(60,60){\line(1,-1){30}}
        \put(60,60){\line(2,-1){60}}
        \put(90,60){\line(-1,-1){30}}
        \put(90,60){\line(1,-1){30}}
        \put(120,60){\line(-2,-1){60}}
        \put(120,60){\line(-1,-1){30}}

        \put(60,-5){\makebox(0,0)[t]{$0$}}
        \put(-5,30){\makebox(0,0)[r]{$d$}}
        \put(20,28){\makebox(0,0)[l]{$e$}}
        \put(65,28){\makebox(0,0)[l]{$a$}}
        \put(96,30){\makebox(0,0)[l]{$b$}}
        \put(125,30){\makebox(0,0)[l]{$c$}}
        \put(-5,60){\makebox(0,0)[r]{$d'$}}
        \put(20,62){\makebox(0,0)[l]{$e'$}}
        \put(65,65){\makebox(0,0)[l]{$a'$}}
        \put(96,62){\makebox(0,0)[l]{$b'$}}
        \put(125,60){\makebox(0,0)[l]{$c'$}}
        \put(60,95){\makebox(0,0)[b]{$1$}}

        \put(60,0){\circle*{3}}
        \put(0,30){\circle*{3}}
        \put(30,30){\circle*{3}}
        \put(60,30){\circle*{3}}
        \put(90,30){\circle*{3}}
        \put(120,30){\circle*{3}}
        \put(0,60){\circle*{3}}
        \put(30,60){\circle*{3}}
        \put(60,60){\circle*{3}}
        \put(90,60){\circle*{3}}
        \put(120,60){\circle*{3}}
        \put(60,90){\circle*{3}}
      \end{picture}
    } 
  \end{picture}
  \caption{MMP/Greechie diagram, and its corresponding Hasse diagram.
\label{fig:mmp}}
\end{figure}

If we used Greechie diagrams for the purpose, we 
would obtain the following structure of Hasse diagrams 
for OMLs of one dimensional Hilbert subspaces. 
The given orthogonalities are $a\perp b,c,d,e$ since 
$a\le b',c',d',e'$, $b\perp c$ since $b\le c'$, and 
$d\perp e$ since $d\le e'$. Also, e.g., $b'$ is the 
complement of $b$ and that means a plane to which $b$ 
is orthogonal: $b'=a\cup c$. Eventually $b\cup b'=1$ where 
$1$ stands for $\cal H$. However, we should better use 
MMP diagrams for which the bare orthogonalities between 
vectors suffice without the exponentially increasing 
complexity of Hasse diagrams. 

Thus we generate orthomodular lattices as Hilbert vectors 
with the help of MMP diagrams that are defined by 
means of the following MMP algorithm \cite{pmmm04b,pmmm03a}
(see a graphical representation in  \cite{pmmm03a}):
\begin{itemize}
\item[(i)] Every vertex (atom) belongs to at least one edge (block);
\item[(ii)] Every edge contains at least 3 vertices;
\item[(iii)] Edges that intersect each other in $n-2$
         vertices contain at least $n$ vertices.
\end{itemize}

As opposed to Greechie diagrams, MMP ones are bare 
hypergraphs---set of vertices and edges with no 
other meaning or conditions imposed on them. 
Greechie diagrams are by definition a shorthand 
notation for Hasse diagrams; e.g., they have
a built-in condition that they cannot contain loops 
of order (3) 4 or less for describing (partially ordered sets) 
lattices. MMP diagrams do not have any such qlimitation. 
For example, to obtain Kochen-Specker vectors we used 4D MMP 
diagrams with loops of order 2. 

However, we always build additional conditions into the 
generation algorithm depending on application. For instance, 
while generating 3D OMLs or 3D Hilbert vectors we built in 
the orthogonality conditions---by means of a preliminary 
pass. In 3D they have the same effect as the condition of 
not having loops of order 4 or less but are much more efficient 
and speed up the programs tremendously. The 3D MMP
diagrams we obtain in the next section are also 

Greechie diagrams because 3D MMP diagrams applied to generation 
of either Hilbert vectors or OMLs are isomorphic (at the level
of vertices and edges) to correlated Greechie diagrams. 

We now ascribe a state (a value) to each lattice element in 
the following standard and well-known way.
\begin{definition}\label{def:state} A state on an orthomodular 
lattice {\rm OML}
is a function $m:{\rm OML}\longrightarrow [0,1]$
{\rm (}for real interval $[0,1]$\/{\rm )} such
that $m(1)=1$ and $a\perp b\ \Rightarrow\ m(a\cup b)=m(a)+m(b)$.
\end{definition}

This implies that
\begin{eqnarray}
&(\forall m\in S)&\hskip-8pt(m(a)+m(a')=1)\hskip80pt {\rm and}\nonumber\\ 
&(\forall m\in S)&\hskip-8pt(a\le b\ \Rightarrow\ (m(a)\le m(b))).\hskip80pt
\label{eq:left-arrow}
\end{eqnarray}
for all elements in OML.

To connect values obtained by a measurement with the structure 
of OML we introduce the so-called {\em strong set of states} 
following Mayet \cite{mayet86}. 

\begin{definition}\label{def:strong} 
A nonempty set $S$ of states on an {\rm OML} is called a {\em strong 
set of states} if
\begin{eqnarray}
(\forall a,b\in{\rm OML})(((\forall m\in S)(m(a)=1\ \Rightarrow
\ m(b)=1))\ \Leftrightarrow\ a\le b)\,.\label{eq:st-qm}
\end{eqnarray}
\end{definition}

Following our references \cite{mpoa99} and \cite{pm-ql-l-hql2}, we also 
introduce a {\em classically strong set of states\/} as follows.

\begin{definition}\label{def:strong-cl} 
A nonempty set $S$ of states on an {\rm OML} is called
a {\em classically strong set of states\/} if
\begin{eqnarray}
(\exists m \in S)(\forall a,b\in{\rm OML})((m(a)=1\ \Rightarrow
\ m(b)=1)\ \Leftrightarrow\ a\le b)\,\label{eq:st-cl}
\end{eqnarray}
We assume that {\rm OML} contains more than one element and that
an empty set of states is not strong.
\end{definition}

\begin{theorem}\label{th:strong-distr} Any {\rm OML} that admits a
classically strong set of states is distributive.
\end{theorem}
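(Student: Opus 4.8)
The plan is to wring out of the single state $m\in S$ supplied by Definition~\ref{def:strong-cl} enough rigidity in the ordering to pin down the whole lattice. The object to watch is the unit filter $F=\{a\in\mathrm{OML}:m(a)=1\}$, and the idea is to read the principal up-sets of its members straight off the classically-strong equivalence; I expect this to collapse the lattice so drastically that distributivity becomes immediate.

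First I would specialise that equivalence. Fixing any $a$ with $m(a)=1$ and letting $b$ vary, the antecedent $m(a)=1$ is true, so $(m(a)=1\Rightarrow m(b)=1)$ collapses to $m(b)=1$, and the equivalence reads $a\le b\Leftrightarrow m(b)=1$; that is, the principal filter $\{b:a\le b\}$ equals $F$ for \emph{every} $a\in F$. Because $m(1)=1$ we have $1\in F$, so $F\neq\emptyset$, and applying this to any two members $a_1,a_2\in F$ yields $a_1\le a_2$ and $a_2\le a_1$, hence $a_1=a_2$ by antisymmetry. Thus $F=\{1\}$. For the complementary case, if $m(a)\neq 1$ then the antecedent is false, so $(m(a)=1\Rightarrow m(b)=1)$ holds vacuously for all $b$, and the equivalence forces $a\le b$ for every $b$, i.e.\ $a=0$. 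Combining the two cases, every element is $0$ or $1$, so the OML is the two-element Boolean algebra and is therefore distributive. The reverse direction of the biconditional never causes trouble, since the monotonicity of states noted after Definition~\ref{def:state} ($a\le b\Rightarrow m(a)\le m(b)$) already gives $a\le b\Rightarrow(m(a)=1\Rightarrow m(b)=1)$ for free.

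The step that repays attention---and the point that really distinguishes Definition~\ref{def:strong-cl} from the genuinely quantum condition of Definition~\ref{def:strong}---is that here a \emph{single} state, not the whole set $S$, is required to recover the order; it is exactly this that makes the condition so restrictive, and the vacuous-implication case ($m(a)\neq 1$) is where the collapse to $0$ happens and so deserves care. Should one prefer an argument that avoids passing through triviality (for instance in anticipation of a set-valued restatement), the natural alternative is to prove distributivity by excluding the smallest non-distributive OML, $\mathrm{MO}_2$: one checks that the rigidity of $F$ is incompatible with a sublattice carrying two distinct pairs of complementary incomparable atoms. I expect the filter computation above to be the cleaner route.
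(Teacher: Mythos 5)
Your argument is correct, but note first that the paper contains no proof to compare it with: its proof of Theorem~\ref{th:strong-distr} is the single sentence ``As given in \cite{mpoa99},'' i.e.\ a citation. What you have written is therefore the self-contained argument the paper omits. Reading Definition~\ref{def:strong-cl} literally, your two cases are exactly right: for $a$ in the unit filter $F=\{x:m(x)=1\}$ the biconditional turns the principal up-set of $a$ into $F$ itself, so any two members of $F$ lie below one another and coincide, giving $F=\{1\}$ since $m(1)=1$; for $a\notin F$ the implication is vacuously true against every $b$, so $a\le b$ for all $b$, and the least element forces $a=0$. Hence the lattice is $\{0,1\}$ and distributivity is immediate. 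Two comments. First, your proof establishes much more than the stated theorem: under the definition exactly as printed, the \emph{only} OML with more than one element admitting a classically strong set of states is the two-element Boolean algebra (already the four-element Boolean algebra $\{0,a,a',1\}$ fails, since $m(a)\neq 1$ would force $a=0$ by your vacuous case, while $m(a)=1$ would force $a'=0$). This collapse is consistent with the theorem but sits oddly with the paper's surrounding discussion, which treats general Boolean algebras as the intended classical models; that tension is a feature of Definition~\ref{def:strong-cl} as printed (and of how faithfully it renders the definition in \cite{mpoa99}), not a flaw in your reasoning. Second, the alternative you sketch---excluding an $\mathrm{MO}_2$ subalgebra---is unnecessary and strictly harder, since it would require the nontrivial fact that an OML without an $\mathrm{MO}_2$ subalgebra is Boolean; the filter computation is both shorter and sharper.
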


\begin{proof}
As given in \cite{mpoa99}.
\end{proof}

Note that we can write Eq.\ (\ref{eq:st-qm}) as 
\begin{eqnarray}
(\forall a,b\in{\rm OML})(((\forall m\in S)(m(a)=1\ \Rightarrow
\ m(b)=1))\ \Rightarrow\ a\le b)\label{eq:st-qm-a}
\end{eqnarray}
because the $\Leftarrow$ part of $\Leftrightarrow$ in 
Eq.\ (\ref{eq:st-qm}) follows from  Eq.\ (\ref{eq:left-arrow}) 
and we can drop it in Eq.\ (\ref{eq:st-qm-a}) as redundant.
From Eq.\ (\ref{eq:st-qm-a}), assuming that an empty $S$ 
is not strong and that OML contains more than one element 
and adding the redundant $\Leftarrow$ from 
Eq.\ (\ref{eq:left-arrow}) we obtain
\begin{eqnarray}
(\forall a,b\in{\rm OML})((\exists m\in S)((m(a)=1\ \Rightarrow
\ m(b)=1)\ \Leftrightarrow\ a\le b))\,.\label{eq:st-qm-b}
\end{eqnarray}

The two assumptions ({\rm OML} contains more than one element 
and an empty set of states is not strong), which we need to 
make the predicate calculus relationship between the second $\forall$ in 
Eq.\ (\ref{eq:st-qm-a}) and $\exists$ in Eq.\ (\ref{eq:st-qm-b})
work, will always hold for any measurement or theoretical 
consideration we might be interested in for any realistic 
application. Notice also that any classically strong set 
of states is strong (in the sense of Def.\ \ref{def:strong})
as well, but not the other way round.

Now, if we assume that any $m$ is classically always 
predetermined, i.e., the same for all the elements of OML, 
then we will get Eq.\ (\ref{eq:st-cl}) from 
Eq.\ (\ref{eq:st-qm-b}) since in that case we can move 
the existential quantifier to the front. Hence, a classical 
measurement evaluating conditions defined on an OML that 
admits a classically strong set of states give the same 
outcomes as a classical probability theory on a Boolean 
algebra, because we can always find a single state 
(probability measure) for all lattice elements. A quantum 
measurement, on the other hand, consists of two inseparable 
parts: an OML and a quantum probability theory, because we 
must obtain different states for different OML elements.

\section{\label{sec:one-state}Orthomodular Lattices with 
Exactly One State}

To find orthomodular lattices with exactly one state we are 
first tempted (following the previously found examples 
\cite{shultz74,navara94,weber94}) to look at the lattices 
with an equal number of atoms and blocks. However, in that case 
we might give every atom the same value, but it is not 
necessarily the only state. Thus we have decided to do both, 
generate and scan OMLs with an equal number of atoms and blocks. 

To carry out the generation we used algorithms and 
programs for generation of MMP, Greechie, and 
bipartite-graph diagrams, described in 
Refs.~\cite{pmmm03a,bdm-ndm-mp-1,bdm-ndm-mp-et-al-08}, 
respectively. (Note that both MMP diagrams and Greechie 
diagrams are hypergraphs.) To carry out the scanning and 
drawing of the diagrams we used algorithms and programs 
for finding states on lattices, described in 
Ref.~\cite{mpoa99} with some additional features 
presented in Ref.~\cite{bdm-ndm-mp-et-al-08}. 

\font\1=cmss8
\font\3=cmr5

We encode MMP hypergraphs by 
means of alphanumeric characters. Each vertex (atom) 
is represented by one of the following alphanumeric  
characters: {\11\hfil $\:$2\hfil $\:$3\hfil $\:$4\hfil 
$\:$5\hfil $\:$6\hfil $\:$7\hfil $\:$8\hfil $\:$9\hfil 
$\:$A\hfil $\:$B\hfil $\:$C\hfil $\:$D\hfil $\:$E\hfil 
$\:$F\hfil $\:$G\hfil $\:$H\hfil $\:$I\hfil $\:$J\hfil 
$\:$K\hfil $\:$L\hfil $\:$M\hfil $\:$N\hfil $\:$O\hfil 
$\:$P\hfil $\:$Q\hfil $\:$R\hfil $\:$S\hfil $\:$T\hfil 
$\:$U\hfil $\:$V\hfil $\:$W\hfil $\:$X\hfil $\:$Y\hfil 
$\:$Z\hfil $\:$a\hfil $\:$b\hfil $\:$c\hfil $\:$d\hfil 
$\:$e\hfil $\:$f\hfil $\:$g\hfil $\:$h\hfil $\:$i\hfil 
$\:$j\hfil $\:$k\hfil $\:$l\hfil $\:$m\hfil $\:$n\hfil 
$\:$o\hfil $\:$p\hfil $\:$q\hfil $\:$r\hfil $\:$s\hfil 
$\:$t\hfil $\:$u\hfil $\:$v\hfil $\:$w\hfil $\:$x\hfil 
$\:$y\hfil $\:$z\hfil $\:$!\hfil $\:$"\hfil $\:$\#\hfil 
$\:${\scriptsize\$}\hfil $\:$\%\hfil $\:$\&\hfil $\:$'\hfil $\:$(\hfil 
$\:$)\hfil $\:$*\hfil $\:$-\hfil $\:$/\hfil $\:$:\hfil 
$\:$;\hfil $\:$$<$\hfil $\:$=\hfil $\:$$>$\hfil $\:$?\hfil 
$\:$@\hfil $\:$[\hfil $\:${\scriptsize$\backslash$}\hfil $\:$]\hfil 
$\:$\^\hfil $\:$\_\hfil $\:${\scriptsize$\grave{}$}\hfil 
$\:${\scriptsize\{}\hfil 
$\:${\scriptsize$|$}\hfil $\:${\scriptsize\}}\hfil 
$\:${\scriptsize\~{}}}\ .\hfill\ 

Each block is represented by a string of characters which 
represent atoms without spaces. Blocks are separated by 
comas without spaces. All blocks in a line form a 
representation of a hypergraph; their order is 
irrelevant---however we shall often present them 
starting with blocks forming the biggest loop to facilitate
their possible drawing; the line must end with a full stop; 
for a hypergraph with $n$ atoms all characters up to the 
$n$-th one, from the list given above, must be used without 
skipping any of the characters. 

Navara presented his 44-44 OML (44 atoms and blocks) with 
only one state (1/3 for each atom) ({\em tire 44\/}) in 
Ref.~\cite{navara94} and gave its figure in Ref.~\cite{navara08}. 
We translated the figure in our formalism so as to read: 
(44-44) {\1123,\hfil 345,\hfil 567,\hfil 789,\hfil 9AB,\hfil 
BCD,\hfil DEF,\hfil FGH,\hfil HIJ,\hfil JKL,\hfil LMN,\hfil 
NOP,\hfil PQR,\hfil RST,\hfil TUV,\hfil VWX,\hfil XYZ,\hfil 
Zab,\hfil bcd,\hfil def,\hfil fgh,\hfil hi1,\hfil c1E,\hfil 
e3G,\hfil g5I,\hfil i7K,\hfil 29M,\hfil 4BO,\hfil 6DQ,\hfil 
8FS,\hfil AHU,\hfil CJW,\hfil ELY,\hfil GNa,\hfil IPc,\hfil 
KRe,\hfil MTg,\hfil OVi,\hfil QX2,\hfil SZ4,\hfil Ub6,\hfil 
Wd8,\hfil YfA,\hfil ahC.}\hfill\ 

From Hans Weber's 73-78 OML without group-valued 
states\footnote{\label{footnote:weber}There is a misprint 
               in Table 1 of Ref.~\cite{navara08}: Weber's 
        OML does not have 74 atoms, 78 blocks, and 156 
        elements but 73 atoms, 78 blocks, and 154 elements: 
        (73-78-no-group-valued) {\1123/,\hfil 345,\hfil 
        567,\hfil 789,\hfil 9AB,\hfil BC1,\hfil PQR,\hfil 
        RST,\hfil TUV,\hfil VWX,\hfil XYZ,\hfil ZaP,\hfil 
        nop,\hfil pqr,\hfil rst,\hfil tuv,\hfil vwx,\hfil 
        xyn,\hfil DEF,\hfil FGH,\hfil HIJ,\hfil JKL,\hfil 
        LMN,\hfil NOD,\hfil bcd,\hfil def,\hfil fgh,\hfil 
        hij,\hfil jkl,\hfil lmb,\hfil z!",\hfil 
        "\#{\scriptsize\$},\hfil {\scriptsize\$}\%\&,\hfil 
        \&'(,\hfil ()$*$,\hfil $*$-z,\hfil /EK,\hfil 28G,\hfil 
        3IO,\hfil 4AF,\hfil 6CH,\hfil Pci,\hfil QWe,\hfil 
        Rgm,\hfil SYd,\hfil Uaf,\hfil n!',\hfil ou\#,\hfil 
        p\%-,\hfil qw",\hfil sy{\scriptsize\$},\hfil 1Pn,\hfil 
        1Vk,\hfil 2Sq,\hfil 2hu,\hfil 3bs,\hfil 4Qv,\hfil 
        4Up,\hfil 4Yj,\hfil 5gz,\hfil 6Sy,\hfil 6W(,\hfil 
        6al,\hfil 7cw,\hfil 8Q{\scriptsize\$},\hfil 9Tt,\hfil 
        9Z),\hfil Am\#,\hfil Bd\%,\hfil DT!,\hfil Eer,\hfil 
        Fc$*$,\hfil GXx,\hfil IZ",\hfil Jf',\hfil LWo,\hfil 
        Mgx,\hfil Mk\&}.\hfill\phantom{.}}\cite{weber94}  
we obtained (by just dropping the first 5 blocks
from his 73-78-no-group-valued Greechie diagram given 
in footnote \ref{footnote:weber})
the following OML with exactly one state (1/3) :
(73-73) {\1BC1,\hfil PQR,\hfil RST,\hfil TUV,\hfil VWX,\hfil 
XYZ,\hfil ZaP,\hfil nop,\hfil pqr,\hfil rst,\hfil tuv,\hfil 
vwx,\hfil xyn,\hfil DEF,\hfil FGH,\hfil HIJ,\hfil JKL,\hfil 
LMN,\hfil NOD,\hfil bcd,\hfil def,\hfil fgh,\hfil hij,\hfil 
jkl,\hfil lmb,\hfil z!",\hfil "\#{\scriptsize\$},\hfil 
{\scriptsize\$}\%\&,\hfil \&'(,\hfil ()$*$,\hfil $*$-z,\hfil 
/EK,\hfil 28G,\hfil 3IO,\hfil 4AF,\hfil 6CH,\hfil Pci,\hfil 
QWe,\hfil Rgm,\hfil SYd,\hfil Uaf,\hfil n!',\hfil ou\#,\hfil p\%-,\hfil 
qw",\hfil sy{\scriptsize\$},\hfil 1Pn,\hfil 1Vk,\hfil 2Sq,\hfil 
2hu,\hfil 3bs,\hfil 4Qv,\hfil 4Up,\hfil 4Yj,\hfil 5gz,\hfil 
6Sy,\hfil 6W(,\hfil 6al,\hfil 7cw,\hfil 8Q{\scriptsize\$},\hfil 
9Tt,\hfil 9Z),\hfil Am\#,\hfil Bd\%,\hfil DT!,\hfil 
Eer,\hfil Fc$*$,\hfil GXx,\hfil IZ",\hfil Jf',\hfil 
LWo,\hfil Mgx,\hfil Mk\&.} Hans Weber in his construction 
obtained a 73-78 OML with exactly one 
state.\footnote{\label{footnote:weber-single}We can obtain a 
               73-78 variety of Weber's original 73-78 with 
       exactly one state if we drop ``/'' in the first block 
       (the only 4 atom block; see footnote \ref{footnote:weber}) 
       of his ``73-78-no-group-valued.'' That single state OML 
       has 148 elements.}

Our generation is pursued by means of two different algorithms: 
a direct generation of all possible Greechie diagrams 
(via generation of MMP diagrams + orthogonality) and by 
generation of bipartite graphs 
that correspond to hypergraphs (OMLs) with equal number of 
atoms and blocks. The generation and procedure was described 
in detail in Ref.~\cite{bdm-ndm-mp-1}. There we also gave 
the number of 35-35 (5), 36-36 (1), 37-37 (0), and 38-38 
(8) OMLs, but not the very lattices (apart from 36-36). 
Now we do so and we also scan them for admitting exactly
one state to obtain the following results. 

Among the five 35-35 OMLs there is one which is 
dual to itself---when we exchange atoms for blocks 
and vice versa we obtain an OML which is isomorphic 
to the original one. The latter OML is shown 
in Fig.~\ref{fig:1-state}(b) and it does admit more than 
one state.

\begin{figure*}[hbt]
\begin{center}
\includegraphics[width=0.49\textwidth]{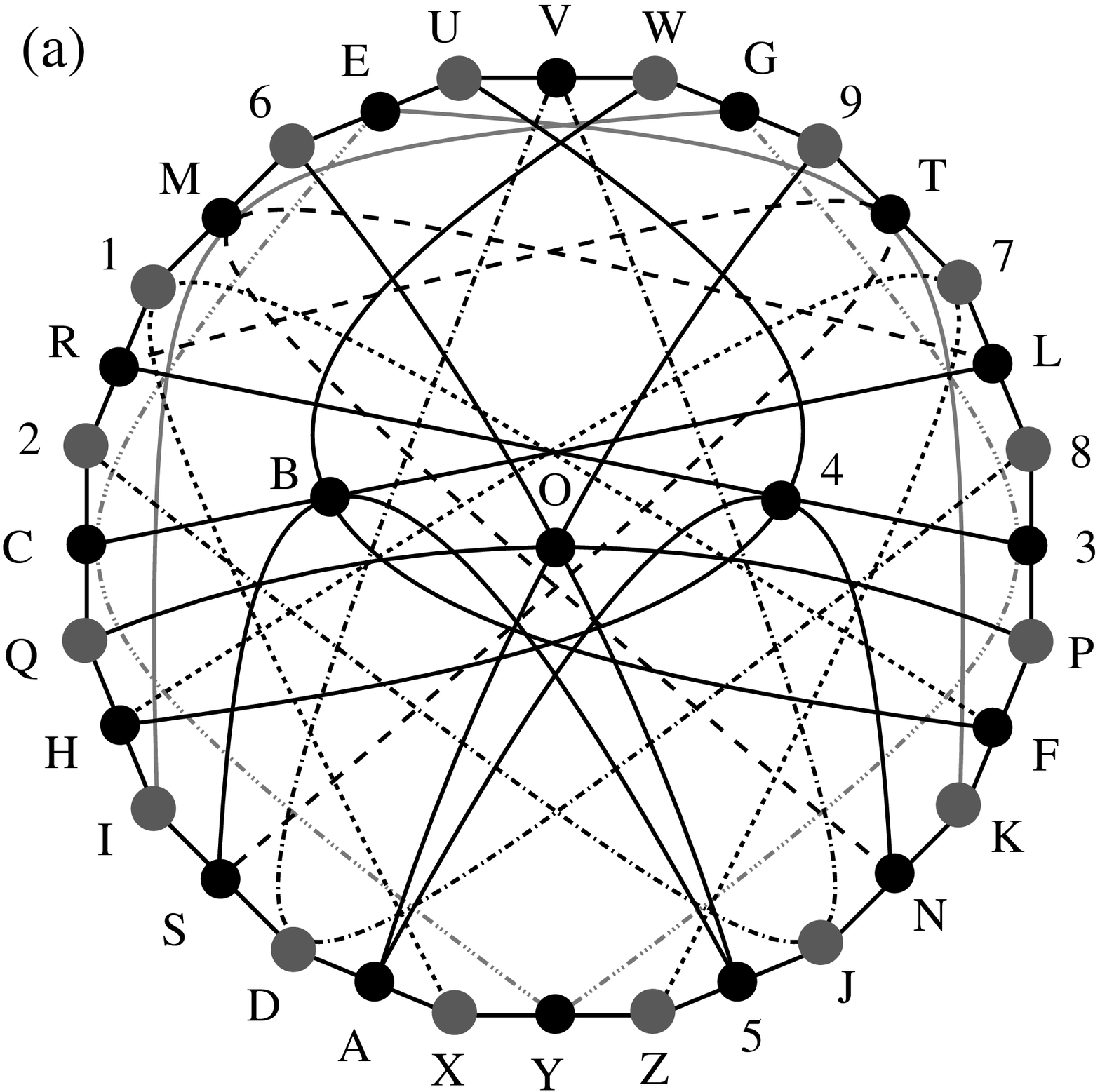}
\includegraphics[width=0.49\textwidth]{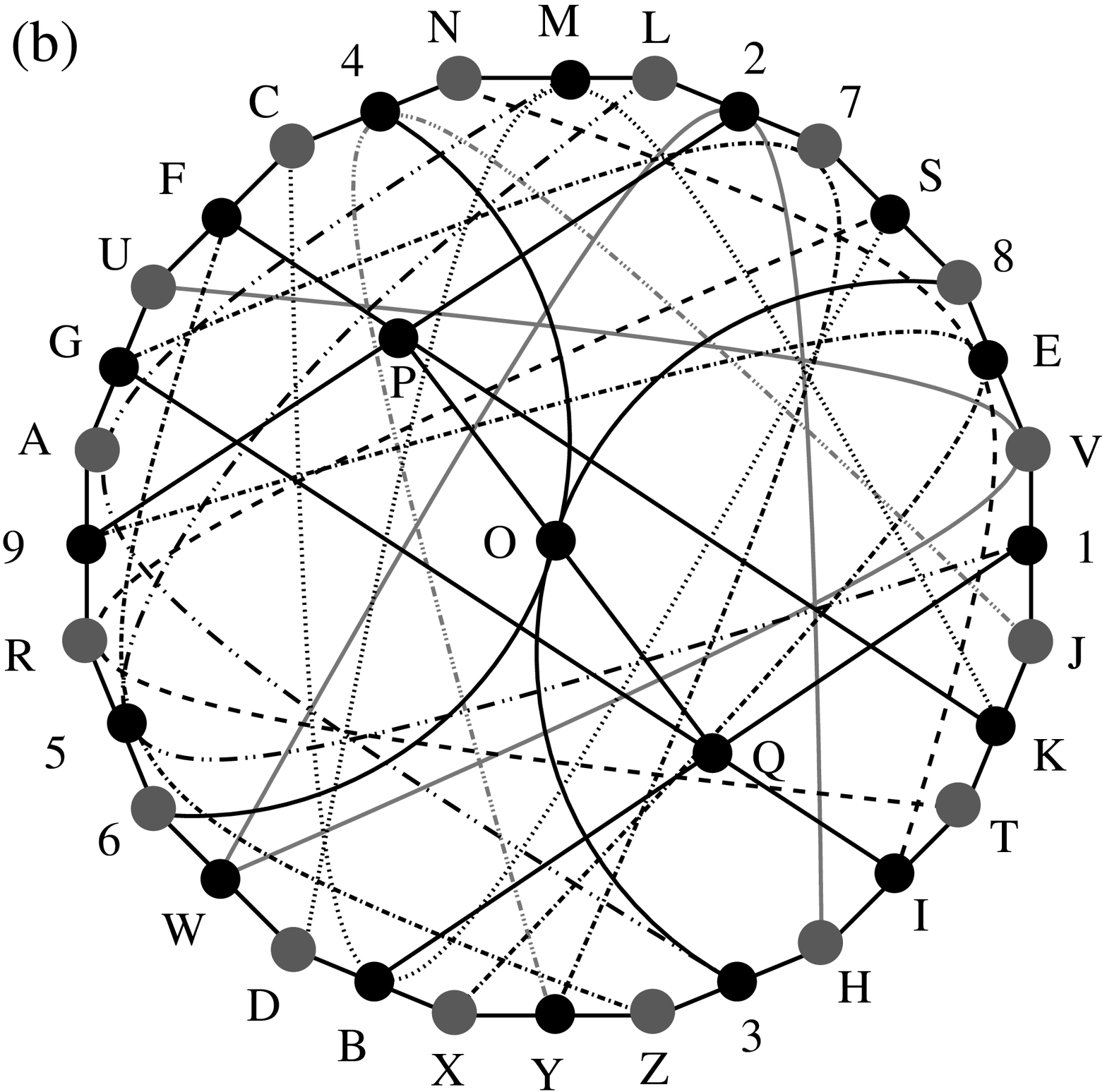}
\end{center}
\caption{(a) OML with 35 atoms and 35 blocks that admits 
exactly one state: (35-35a)
{\1XYZ,\hfil Z5J,\hfil JNK,\hfil KFP,\hfil P38,\hfil 
8L7,\hfil 7T9,\hfil 9GW,\hfil WVU,\hfil UE6,\hfil 
6M1,\hfil 1R2,\hfil 2CQ,\hfil QHI,\hfil ISD,\hfil 
DAX,\hfil RST,\hfil OPQ,\hfil LMN,\hfil GIM,\hfil 
EKT,\hfil BCL,\hfil 9AO,\hfil 56O,\hfil 34R,\hfil 
4AN,\hfil 5BS,\hfil BFW,\hfil 2JV,\hfil 4HU,\hfil 
8DV,\hfil 7HZ,\hfil 3GY,\hfil 1FX,\hfil CEY.};
(b) OML with 35 atoms and 35 blocks that is dual to itself and 
admits more than one state: (35-35e) 
{\1XYZ,\hfil Z3H,\hfil HIT,\hfil TKJ,\hfil J1V,\hfil 
VE8,\hfil 8S7,\hfil 72L,\hfil LMN,\hfil N4C,\hfil 
CFU,\hfil UGA,\hfil A9R,\hfil R56,\hfil 6WD,\hfil 
DBX,\hfil UVW,\hfil RST,\hfil OPQ,\hfil GIQ,\hfil 
FKP,\hfil EIN,\hfil DKM,\hfil BCS,\hfil 46O,\hfil 
38O,\hfil 15L,\hfil 3AM,\hfil 1BQ,\hfil 29P,\hfil 
2HW,\hfil 4JY,\hfil 9EX,\hfil 7GY,\hfil 5FZ.}\hfill\ }
\label{fig:1-state}
\end{figure*}

\font\2=cmss10

The first two states found by our program are: 

{\parindent=0pt\2\{{\1(1)}$\frac{1}{6}$,$\frac{1}{2}$,$\frac{1}{2}$,$\frac{1}{6}$,$\frac{1}{2}$,$\frac{1}{2}$,$\frac{1}{6}$,$\frac{1}{6}$,$\frac{1}{2}$,{\1(A)}$\frac{1}{2}$,$\frac{1}{6}$,$\frac{1}{6}$,$\frac{1}{2}$,$\frac{1}{6}$,$\frac{1}{2}$,$\frac{1}{6}$,$\frac{1}{2}$,$\frac{1}{6}$,{\1(J)}$\frac{1}{6}$,$\frac{1}{2}$,$\frac{1}{3}$,0,$\frac{2}{3}$,$\frac{1}{3}$,0,$\frac{2}{3}$,0,$\frac{2}{3}$,{\1(T)}$\frac{1}{3}$,$\frac{1}{3}$,$\frac{2}{3}$,0,$\frac{1}{3}$,$\frac{2}{3}$,{\1(Z)}0\}}

\smallskip

{\parindent=0pt\2\{{\1(1)}$\frac{1}{2}$,$\frac{1}{6}$,$\frac{1}{6}$,$\frac{1}{2}$,$\frac{1}{6}$,$\frac{1}{6}$,$\frac{1}{2}$,$\frac{1}{2}$,$\frac{1}{6}$,{\1(A)}$\frac{1}{6}$,$\frac{1}{2}$,$\frac{1}{2}$,$\frac{1}{6}$,$\frac{1}{2}$,$\frac{1}{6}$,$\frac{1}{2}$,$\frac{1}{6}$,$\frac{1}{2}$,{\1(J)}$\frac{1}{2}$,$\frac{1}{6}$,$\frac{1}{3}$,$\frac{2}{3}$,0,$\frac{1}{3}$,$\frac{2}{3}$,0,$\frac{2}{3}$,0,{\1(T)}$\frac{1}{3}$,$\frac{1}{3}$,0,$\frac{2}{3}$,$\frac{1}{3}$,0,{\1(Z)}$\frac{2}{3}$\}}\break
where the values are for the aforementioned alphanumeric 
characters for the atoms in the above order from {\11} to {\1Z}. 
Some atoms are indicated. 

All other four 35-35 OMLs have exactly one state (1/3). One of them 
is drawn in Fig.~\ref{fig:1-state}(a). It has a mirror symmetry 
with respect to a vertical line drawn through atoms {\1V,O,Y}. 
The remaining 3 hypergraphs are given below. The reader can 
easily draw them since the first 16 blocks (chained) form a loop
(hexadecagon).\hfill\break
\hbox to 90pt{(35-35b)\hfill}{\1YXZ,\hfil Z3J,\hfil JTK,\hfil 
KNG,\hfil GU6,\hfil 65R,\hfil R9A,\hfil AWE,\hfil EMI,\hfil 
IFQ,\hfil Q2B,\hfil BCS,\hfil S87,\hfil 7O1,\hfil 1VH,\hfil 
H4Y,\hfil UVW,\hfil RST,\hfil OPQ,\hfil LMN,\hfil HIT,\hfil
DKP,\hfil 48L,\hfil 36O,\hfil 25L,\hfil 3CM,\hfil 4AP,\hfil 
19N,\hfil 2JW,\hfil CDV,\hfil 8FU,\hfil BGY,\hfil 9FZ,\hfil 
7EX,\hfil 5DX.}\hfill\break
\hbox to 90pt{(35-35c)\hfill}{\1YXZ,\hfil ZFH,\hfil HIM,\hfil 
MLN,\hfil NKJ,\hfil J4S,\hfil SB3,\hfil 3P9,\hfil 91U,\hfil 
UVW,\hfil WTQ,\hfil QAG,\hfil GCE,\hfil E67,\hfil 7O2,\hfil 
28Y,\hfil RST,\hfil OPQ,\hfil FGK,\hfil DEI,\hfil ABM,\hfil 
89K,\hfil 5LP,\hfil 4HO,\hfil 5FR,\hfil 5DV,\hfil 4CU,\hfil 
17R,\hfil 2BV,\hfil 8IT,\hfil 6NW,\hfil CLY,\hfil DJX,\hfil 
1AX,\hfil 36Z.}\hfill\break
\hbox to 90pt{(35-35d)\hfill}{\1XYZ,\hfil ZFH,\hfil HIM,\hfil 
MLN,\hfil NJK,\hfil K8T,\hfil TWQ,\hfil QAE,\hfil EGC,\hfil 
C4U,\hfil U92,\hfil 2S7,\hfil 7P1,\hfil 1VB,\hfil BR3,\hfil 
36X,\hfil UVW,\hfil RST,\hfil OPQ,\hfil FGK,\hfil DEI,\hfil 
ABN,\hfil 89I,\hfil 67G,\hfil 5LS,\hfil 4JP,\hfil 4HR,\hfil 
5FO,\hfil 5DV,\hfil 39O,\hfil 6MW,\hfil CLY,\hfil DJX,\hfil 
2AZ,\hfil 18Y.}\hfill\ 

There is only one 36-36 OML which is dual to itself and 
has only one state (1/3 for each atom). 
Its figure, that shows a cyclic (9) symmetry, is given 
in Fig.~2 of Ref.~\cite{bdm-ndm-mp-1}. Here we write 
it down starting with 18 blocks that form the biggest 
loop (octadecagon). \hfill\break
\hbox to 65pt{(36-36)\hfill}{\1XWY,\hfil YTS,\hfil SLP,\hfil PQ7,\hfil 
7RO,\hfil OZJ,\hfil J5B,\hfil BN8,\hfil 82F,\hfil FCV,\hfil 
VH4,\hfil 416,\hfil 6IA,\hfil A9M,\hfil MK3,\hfil 3GU,\hfil 
UDE,\hfil EaX,\hfil NRX,\hfil MQW,\hfil LUZ,\hfil KVa,\hfil 
KOT,\hfil IPa,\hfil GHY,\hfil FWZ,\hfil BDQ,\hfil ACR,\hfil 
9HJ,\hfil 8GI,\hfil 6DT,\hfil 5CS,\hfil 4LN,\hfil 29E,\hfil 
135,\hfil 127.\hfill\ }

There are no 37-37 OMLs and there are eight 38-38 OMLs. 
Each one of the latter OMLs is dual to itself and they all 
have only one state (1/3). Using the programs we introduced 
in Ref.~\cite{mpoa99}, we write them down so as to put the 
blocks that form the biggest loops [enneadecagons (19-gons) 
for the first seven and an octadecagon for the eighth] 
to the front (with chained blocks):\hfill\break
\hbox to 50pt{(38-38a)\hfill}{\1abc,\hfil c12,\hfil 2L9,\hfil 98J,\hfil 
JHU,\hfil UPQ,\hfil QKS,\hfil SNO,\hfil O7Y,\hfil YXZ,\hfil 
ZWV,\hfil VA5,\hfil 5T4,\hfil 4MB,\hfil BCD,\hfil DIF,\hfil 
FEG,\hfil GR3,\hfil 36a,\hfil TUY,\hfil RSW,\hfil LMQ,\hfil 
IJW,\hfil AGH,\hfil EMZ,\hfil 69X,\hfil DKX,\hfil 3CP,\hfil 
8CO,\hfil 46N,\hfil 7AL,\hfil 1FN,\hfil 1PV,\hfil 2RT,\hfil 
5Kb,\hfil 8Eb,\hfil 7Ia,\hfil BHc.\hfill\break}
\hbox to 47pt{(38-38b)\hfill}{\1bac,\hfil cF3,\hfil 3SX,\hfil XI7,\hfil 
7GV,\hfil VR1,\hfil 14L,\hfil LZM,\hfil M9T,\hfil TNO,\hfil 
OAW,\hfil WPQ,\hfil QJB,\hfil BCH,\hfil HDE,\hfil E5Y,\hfil 
Y2U,\hfil UK6,\hfil 68b,\hfil XYZ,\hfil UVW,\hfil RST,\hfil 
IJK,\hfil FGH,\hfil 8QZ,\hfil 6CS,\hfil 34P,\hfil 24N,\hfil 
8GN,\hfil 9EP,\hfil ACL,\hfil 5JR,\hfil DIO,\hfil FKM,\hfil 
1Db,\hfil 2Ba,\hfil 5Ac,\hfil 79a.\hfill\break}
\hbox to 50pt{(38-38c)\hfill}{\1abc,\hfil c65,\hfil 5ET,\hfil TZU,\hfil 
U1F,\hfil FGH,\hfil H4K,\hfil KLW,\hfil WIJ,\hfil J7C,\hfil 
CAB,\hfil BV9,\hfil 93N,\hfil NMY,\hfil YDP,\hfil POS,\hfil 
SQR,\hfil RX8,\hfil 82a,\hfil XYZ,\hfil VWZ,\hfil DEJ,\hfil 
89E,\hfil 6SV,\hfil 7HX,\hfil 5AL,\hfil AGP,\hfil 2CM,\hfil 
6FM,\hfil INQ,\hfil 2KO,\hfil 1LR,\hfil 4QT,\hfil 3OU,\hfil 
1Db,\hfil GIa,\hfil 4Bb,\hfil 37c.\hfill\break}
\hbox to 50pt{(38-38d)\hfill}{\1cba,\hfil a95,\hfil 5FQ,\hfil QUP,\hfil 
PM2,\hfil 2AV,\hfil VWZ,\hfil ZYX,\hfil X47,\hfil 76E,\hfil 
E3T,\hfil TN8,\hfil 8IB,\hfil B1S,\hfil SDH,\hfil HJO,\hfil 
OKL,\hfil LRC,\hfil CGc,\hfil TUY,\hfil RSW,\hfil MNO,\hfil 
IJZ,\hfil GHU,\hfil DEF,\hfil CFI,\hfil 9AB,\hfil 7AG,\hfil 
9LY,\hfil 5NW,\hfil 1MX,\hfil 3KV,\hfil 4KQ,\hfil 6PR,\hfil 
13c,\hfil 2Db,\hfil 48b,\hfil 6Ja.\hfill\break}
\hbox to 50pt{(38-38e)\hfill}{\1acb,\hfil b59,\hfil 9SR,\hfil RG8,\hfil 
83V,\hfil VWU,\hfil U1P,\hfil POY,\hfil YCN,\hfil NML,\hfil 
LQ2,\hfil 2FT,\hfil T7X,\hfil XAE,\hfil EDH,\hfil H6I,\hfil 
IKJ,\hfil JZB,\hfil B4a,\hfil XYZ,\hfil STW,\hfil QRZ,\hfil 
FGH,\hfil CKW,\hfil 9DN,\hfil 78M,\hfil 5AK,\hfil 4AL,\hfil 
BDV,\hfil 5FP,\hfil 1JM,\hfil 3IO,\hfil 6QU,\hfil 4OS,\hfil 
23c,\hfil 67b,\hfil 1Ec,\hfil CGa.\hfill\break}
\hbox to 50pt{(38-38f)\hfill}{\1XZY,\hfil YTO,\hfil OA2,\hfil 
24a,\hfil a13,\hfil 3E6,\hfil 6SH,\hfil HFb,\hfil b97,\hfil 
7C8,\hfil 8QJ,\hfil JKW,\hfil WUV,\hfil VRP,\hfil PDM,\hfil 
MLN,\hfil NIB,\hfil Bc5,\hfil 5GX,\hfil abc,\hfil STW,\hfil 
QRZ,\hfil HIZ,\hfil FGV,\hfil DEY,\hfil BCT,\hfil 9AR,\hfil 
56A,\hfil 48G,\hfil 2IU,\hfil 1KX,\hfil 1CP,\hfil 7EU,\hfil 
9KN,\hfil FLO,\hfil 4MS,\hfil 3LQ,\hfil DJc.\hfill\break}
\hbox to 50pt{(38-38g)\hfill}{\1bac,\hfil c2J,\hfil JKY,\hfil YUT,\hfil 
T3D,\hfil DCE,\hfil EFR,\hfil R9H,\hfil HZI,\hfil I57,\hfil 
7G6,\hfil 6XN,\hfil NOQ,\hfil QWP,\hfil P8A,\hfil A1L,\hfil 
LMS,\hfil SV4,\hfil 4Bb,\hfil XYZ,\hfil VWZ,\hfil RSU,\hfil 
FGW,\hfil 9AB,\hfil 8EK,\hfil 5PU,\hfil 3MQ,\hfil BGT,\hfil 
2DV,\hfil 4KO,\hfil 1IO,\hfil 29N,\hfil 7JM,\hfil CLX,\hfil 
5Cb,\hfil 68a,\hfil 1Fc,\hfil 3Ha.\hfill\break}
\hbox to 60pt{(38-38h)\hfill}{\1abc,\hfil c1K,\hfil KLX,\hfil 
XMN,\hfil N94,\hfil 43U,\hfil UZT,\hfil T6D,\hfil DGA,\hfil 
AVI,\hfil IJS,\hfil SPQ,\hfil QWO,\hfil O2H,\hfil H78,\hfil 
8BF,\hfil FRC,\hfil C5a,\hfil XYZ,\hfil VWZ,\hfil RSY,\hfil 
GHY,\hfil EFW,\hfil CDL,\hfil 9AB,\hfil 6NQ,\hfil 57M,\hfil 
EJM,\hfil 3LO,\hfil 14R,\hfil BKP,\hfil 2JT,\hfil 17V,\hfil 
5PU,\hfil 68b,\hfil 29a,\hfil EGc,\hfil 3Ib.\hfill\ }

Thus we obtain the following theorem. 

\begin{theorem}
There exist precisely $5$, $1$, $0$, and $8$ {\rm OML}s with 
$35$, $36$, $37$, and $38$ both atoms and blocks, respectively, 
and none of them admits a strong set of states.  
\end{theorem}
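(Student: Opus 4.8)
The plan is to separate the statement into two parts proved by different means---the \emph{enumeration} (the counts $5,1,0,8$) and the \emph{structural property} (absence of a strong set of states)---and to reduce the second part to a finite verification through one elementary lemma.

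For the enumeration I would rely on the exhaustive generation procedure of the cited references: the MMP conditions (i)--(iii) together with the orthogonality pass produce, by construction, \emph{every} Greechie diagram whose loops have order at least five, and such diagrams are precisely the Greechie diagrams of OMLs. Running this generator restricted to an equal number $n$ of atoms and blocks, for $n=35,36,37,38$, yields the candidate lists. It then remains to check, for each displayed MMP code, that conditions (i)--(iii) and the loop-order requirement hold (so that it genuinely encodes an OML), and that the members of each list are pairwise non-isomorphic, the latter being settled by comparing canonical relabellings of the hypergraphs. Completeness of the list---that no further diagram exists---is exactly the exhaustiveness of the generator established in the earlier work.

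The property part rests on the following observation, which I would isolate as a lemma. \emph{If an OML possesses an atom on which no state takes the value $1$, then it admits no strong set of states.} Indeed, suppose $S$ were strong and let $a$ be such an atom. Applying Definition~\ref{def:strong} with $b=0$ and using $a\not\le 0$, the right-hand side of the equivalence in (\ref{eq:st-qm}) fails, so its left-hand side must fail as well; since $m(0)=0\ne 1$ for every state, this forces some $m\in S$ with $m(a)=1$, contradicting the hypothesis on $a$. With this lemma the conclusion is immediate for the thirteen single-state diagrams: their unique state assigns $1/3$ to every atom, so no atom reaches the value $1$ and the lemma applies directly. For the one exceptional diagram 35-35e, which carries more than one state, I would instead compute its entire state space---a convex polytope---using the state-finding algorithm of~\cite{mpoa99}, and exhibit a single atom whose value is bounded strictly below $1$ on every vertex of that polytope; the lemma then disposes of this case too.

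The main obstacle is twofold. On the generation side it is the exhaustiveness claim: proving that the search has genuinely covered all $n$--$n$ diagrams is a heavy computational matter resting on the correctness of the earlier algorithms rather than on a short argument. On the state side the delicate point is 35-35e, where one must control \emph{all} states, not merely the two sample states listed. Structurally, forcing $m(a)=1$ on an atom $a$ propagates $0/1$ constraints through the incident blocks in a Kochen--Specker-like cascade, and the crux is to show that this propagation is incompatible with the global loop structure of the diagram, so that no state can attain $1$ on the chosen atom. Verifying this for every vertex of the state polytope, rather than by a closed-form argument, is where the real work lies.
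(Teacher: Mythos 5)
Your treatment of the enumeration is the same as the paper's: the counts $5,1,0,8$ rest on the exhaustive computational generation of the cited references, and the paper itself offers nothing beyond this (its ``proof'' is the computational narrative of Section~\ref{sec:one-state}). Your lemma is also correct: taking $b=0$ in Eq.~(\ref{eq:st-qm}), every state satisfies $m(0)=0$, while $a\not\le 0$ for an atom $a$, so an atom on which no state attains the value $1$ rules out every candidate strong set $S$. Applied to the thirteen lattices whose unique state is constantly $1/3$ on the atoms, this is clean, and in fact more elementary than the route the paper gestures at (a singleton strong set is classically strong in the sense of Def.~\ref{def:strong-cl}, so Theorem~\ref{th:strong-distr} would force distributivity, which these lattices visibly violate).

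The gap is 35-35e. Your plan presupposes that some atom of 35-35e is bounded strictly below $1$ over its entire state space, but you give no reason such an atom exists, and nothing in the paper supports it---only two sample states are listed, and a polytope can have many more vertices. Non-attainment of $1$ on an atom is \emph{sufficient} for the failure of strongness but far from necessary: there exist OMLs in which every atom attains the value $1$ at some state (unital OMLs) and which nevertheless admit no strong set of states; this is precisely the situation of lattices violating Godowski's equations, which the paper invokes in connection with these very examples. If 35-35e turns out to be unital, your lemma with $b=0$ is silent and your argument fails for one of the five 35-35 lattices, leaving the theorem unproved. The safe reduction---and the one the paper's scanning programs from Ref.~\cite{mpoa99} actually implement---is this: since enlarging a strong set preserves strongness, an OML admits a strong set of states if and only if the set of \emph{all} its states is strong; hence for 35-35e one must check, for every pair $a\not\le b$ with $b$ ranging over all lattice elements (not just $b=0$), whether every state with $m(a)=1$ also forces $m(b)=1$, a linear-programming test on faces of the state polytope. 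You should either carry out this general test or first prove that 35-35e is non-unital; as written, the case 35-35e is unsupported.
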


\section{\label{sec:conclusion}Conclusion}

In this elaboration we formulated a kind of reverse Kochen-Specker
theorem that amounts to finding conditions imposed on a basic 
algebra of the Hilbert space as well as on Hilbert space 
vectors that cannot be satisfied by any quantum system. 
In 3-dim space this also applies to counterfactual orientation of 
experimental setups in the same way in which it works for 
Kochen-Specker setups. With the latter setups it is impossible 
to obtain consistent classical 0-1 measurement outcome and with
our setups it is impossible to obtain consistent quantum outcomes.

Our ``reversing of Kochen-Specker theorem'' consists in 
 generating basic quantum algebras---orthomodular lattices 
(OMLs)---that do not admit properties of the Hilbert space. 
There are many possible approaches to such a generation depending 
on what properties we want to consider. In this paper we decided 
to investigate the state properties of OMLs in a 3-dim space 
because states impose complex structures on ortholattice 
(algebra satisfying conditions (1)-(6) in 
Def.~\ref{def:ourOL})---for
instance an ortholattice with exactly one strong state is a Boolean 
algebra. Therefore we examined recent results---many of them 
reviewed in Mirko Navara's review \cite{navara08}---on 
orthomodular lattices  with exactly one state. 

Using MMP algorithms, we previously used to exhaustively 
generate Kochen-Specker vectors~\cite{pmmm04b,pmmm03a}, 
we obtained five smallest OMLs in 3-dim space with 35 
atoms and 35 blocks (35-35), one 
36-36, no 37-37, and eight  38-38. The smallest   
previously found such an OML has 44 atoms and 44 
blocks.~\cite{navara94} OMLs that admit exactly one state
have been previously investigated and used to obtain various 
properties of OMLs and related algebras and 
structures.\cite{navara08} MMP diagrams (hypergraphs) and 
algorithms are briefly reviewed in Secs.~\ref{sec:ortho} 
and \ref{sec:one-state} and in detail in the cited 
references. In 3D MMP for Hilbert vectors and OMLs amounts 
(at the level of bare vertices and edges).
 
Among five 35-35 OMLs we generated, four have exactly one 
state (1/3 for each atom) and the fifth, which is dual 
(with roles of atoms and blocks exchanged) to itself, 
has at least two states. The MMP diagrams (Greechie 
diagrams) of all five 35-35 OMLs are given in 
Sec.~\ref{sec:ortho} as well as the states of the dual 
OML. We also provide figures of 35-35a with a single 
state and of the dual 35-35e in Fig.~\ref{fig:1-state}. 

OML 36-36 is dual to itself and has exactly one 
state (1/3 for each atom). All of the 38-38s are 
dual to themselves and they all have exactly one 
state (1/3 for each atom). We give MMP (Greechie) diagrams 
for all of them Sec.~\ref{sec:ortho}.

We carried out several other tests on the aforementioned OMLs
and obtained the following results:
\begin{itemize}
\item{}None of the obtained states belongs to a strong 
set of states, while it is apparently possible to have an 
OML admitting only one strong state as indicated by Pt\'ak 
results for posets~\cite{ptak87}.
Thus the former states are non-quantum (and therefore also
non-classical), while the latter are classical (and therefore 
also quantum).  Therefore we might view the former states as 
a kind of ``reverse Kochen-Specker theorem'' as explained in 
the Introduction. 
\item{}None of the generated OMLs pass any known stronger 
than orthomodularity condition as, e.g., Godowski equations 
(because they do not admit strong states) or orthoarguesian 
equations \cite{pm-ql-l-hql2}.
\item{}OMLs that admit exactly one state that does not belong 
to a strong set of states are not limited only to those OMLs 
that have equal number of atoms and blocks as 73-78 given in 
footnote \ref{footnote:weber-single} proves. However, for lattices 
that have less blocks than atoms this is still an open 
question. We did not scan such lattices so far because that 
is computationally very demanding but it is one of our 
future projects.
\item{}OMLs with equal number of atoms and blocks do not 
necessarily have only one state as dual 35-35e in 
Fig.~\ref{fig:1-state}(b) shows. Other duals, 36-36 and all 
38-38, have only one state though. 
\end{itemize}

We should stress here that the above OMLs with 
with exactly one state cannot be enlarged so as to become 
a sublattice of a Hilbert lattice because the 
orthoarguesian equations that have to hold in any
Hilbert lattice fail in all above OMLs. And if an equation 
fails in a sublattice (meaning a subalgebra) it fails in 
the lattice as well.
  
To obtain further results both on bigger OMLs that admit 
only one state and on OMLs that do not admit other quantum 
conditions we are developing new algorithms some of which are 
presented in Ref.~\cite{bdm-ndm-mp-et-al-08}. These algorithms 
will however not enable a generation of lattices with less 
blocks than atoms at all, so the aforementioned project of 
scanning lattices with less blocks than atoms will not 
make use of them. 

We are developing new algorithms because the algorithms 
({\tt gengre}) we developed for generation of lattices with 
arbitrary number of atoms and blocks and used for obtaining 
Kochen-Specker vectors spend years of CPU time on our 
clusters. So, to go over 40 atoms and blocks we had to 
specialise the algorithms.\ \cite{bdm-ndm-mp-et-al-08} 

In the end we should stress that our results also contribute 
to the theory of orthomodular lattices and to the theory of 
graphs and hypergraphs. 

\bigskip
{\parindent=0pt
{\bf Acknowledgements}

\medskip
Supported by the {\em Ministry of Science, Education, and 
Sport of Croatia} through {\em Distributed Processing and 
Scientific Data Visualization} program and {\em Quantum 
Computation: Parallelism and Visualization} project 
(082-0982562-3160).

\smallskip
Computational support was provided by the cluster Isabella of 
the University Computing Centre of the University of Zagreb 
and by the Croatian National Grid Infrastructure.
}

\end{document}